\documentclass[runningheads]{llncs}

\newcommand{\remove}[1]{}
\usepackage[normalem]{ulem}

\usepackage[english]{babel}

\usepackage{tikz}


\usepackage{amsmath}
\usepackage{amssymb}
\usepackage[colorlinks=true, allcolors=blue]{hyperref}

\usepackage{algorithm}
\usepackage{algpseudocode}
\usepackage{cleveref}
\usepackage{comment}
\usepackage{float}
\usepackage{ulem}
\usepackage{color}
\definecolor{Red}{rgb}{1,0,0}
\definecolor{Blue}{rgb}{0,0,1}
\definecolor{Purple}{rgb}{0.7,0.0,1.0}
\definecolor{Green}{rgb}{0,0.5,0}
\definecolor{myGreen}{rgb}{0,0.8,0.4}

\newcommand{\G}{\tilde G}
\newcommand{\UDST}{UVDST}
\newcommand{\SSPT}{SSPT}

\begin{document}
\title{
The Steiner Shortest Path Tree Problem\thanks{Partially supported by the Rita Altura trust chair in computer science, Frankel center for computer science, BGU-NJIT Inst. for Future Technologies, the Israeli Smart Transportation Research Center (ISTRC), and Israeli Science Foundation (Grant No. 465/22).}} 

\author{
Omer Asher\inst{1}
\and Yefim Dinitz\inst{1} 
\and Shlomi Dolev\inst{1}
\and\\ Li-on Raviv\inst{2}
\and Baruch Schieber\inst{3}
}
\titlerunning{Shallowest Shortest Path Tree}
\authorrunning{Asher, Dinitz, Dolev, Raviv, and Schieber}
\institute{Dept. of CS, Ben-Gurion Univ. of the Negev,
Beer-Sheva, Israel\\ 
\and Gilat, Israel\\
\and Dept. of CS, New Jersey Inst. of Technology, Newark, NJ, USA
}
\maketitle

\begin{abstract}
We introduce and study a novel problem of computing a shortest path tree with a minimum number of non-terminals. It can be viewed as an (unweighted) \textit{Steiner Shortest Path Tree} (\SSPT) that spans a given set of terminal vertices by shortest paths from a given source while minimizing the number of nonterminal vertices included in the tree. 
This problem is motivated by applications where shortest-path connections from a source are essential, and where reducing the number of intermediate vertices helps limit cost, complexity, or overhead. 

We show that the \SSPT\ problem is NP-hard. To approximate it,
we introduce and study the \textit{shortest path subgraph} of a graph.
Using it, we show 
an approximation-preserving reduction of \SSPT\ to the uniform vertex-weighted variant of the Directed Steiner Tree (DST) problem, termed \UDST. 
Consequently, the algorithm of [Grandoni et al., 2023] approximating DST implies a quasi-polynomial polylog-approximation algorithm for \SSPT. 
We present a \textit{polynomial} polylog-approximation algorithm for \UDST, and thus for \SSPT\ 
for a restricted class of graphs.
\end{abstract}

\section{Introduction}
Given an undirected or directed graph with nonnegative edge weights and a source vertex $s$, a shortest path tree from $s$ can be found by classic algorithms. 
We introduce the problem of finding a shortest path tree that spans a given set of vertices, called terminals, while minimizing the number of nonterminal vertices in the tree. 
This problem is closely related to the classical unweighted Steiner Tree problem, but differs in that it designates a root vertex and enforces that all paths from the root to each terminal are the shortest.
Due to this relation, we term the problem the Steiner Shortest Path Tree problem (\SSPT). 
This problem is motivated by scenarios where shortest-path connections from a source are essential and where reducing the number of intermediate vertices helps limit cost, complexity, or overhead. 

\subsection{Motivation}
\SSPT\ is a fundamental graph problem that has some obvious applications. One such application is multicasting from a source (root) to a given set of terminal nodes, while simultaneously minimizing latency by restricting paths to the shortest and reducing the total number of bits transmitted by limiting the use of nonterminal nodes. 
In the context of secure communication, it is desirable to minimize the number of nonterminal nodes exposed to the multicast message while still achieving the fastest possible delivery.

An additional optimization application is in \textit{hierarchical supply chain management}~\cite{miller2002}. Consider an airplane manufacturer with service centers located in select airports (terminals). 
To minimize the ground time of airplanes, these service centers must ensure the timely availability of repair parts. To replenish the repair parts in these service centers, the manufacturer may use \textit{hierarchical hubs}. When a part is needed in a service center, it checks its availability in the nearest hub in the hierarchy, which may be used to replenish parts in several service centers. Designing the most cost-effective hierarchical supply chain from a given set of potential hubs gives rise to the \SSPT\ problem.  

\subsection{Our Results and Solution Approach}
We prove that \SSPT\ is NP-Hard. To obtain an approximation, we first introduce and study the concept of the \textit{subgraph of shortest paths} of a weighted graph, which is the union of all the shortest paths from the source. 
Using this concept, we show an approximation-preserving reduction of \SSPT\ to the uniform vertex-weighted variant of the Directed Steiner Tree (DST) problem \cite{charikar1999approximation}, termed \UDST. By this reduction, any approximation for \UDST\ implies a similar approximation for \SSPT.

In particular, the DST algorithm of \cite{charikar1999approximation} achieves an $O(k^\varepsilon)$-approximation in polynomial time for any fixed $\varepsilon > 0$, where $k$ is the number of terminals. The DST algorithm of~\cite{GLS2023} 
achieves an \( O(\log^2 k/\log\log k) \)-approximation in quasi-polynomial time.
Thus, the same results hold for \SSPT.

We suggest a \textit{polynomial} polylog-approximation algorithm solving \UDST\ on polylogarithmically shallow graphs. For logarithmically shallow graphs,
our algorithm achieves an $O(\log^2 k)$-approximation ratio. We note that random graphs, expander graphs, and Small World graphs (i.e., the Internet network) are polylogarithmically shallow graphs. We extend these results to \SSPT, in which case we require that there exists a polylogarithmically (or logarithmically) shallow shortest path tree rooted at the source vertex. 
We extend these algorithms to apply to the general vertex-weighted variant of DST, 
as well as to the weighted variant of SSPT, achieving comparable performance as above on specific classes of graphs.

\subsection{Related Work}
\label{s:related}

Our problem shares conceptual connections with various Steiner tree variants.
%
In the {\it shallow-light tree} problem~\cite{elkin2015steiner}, given an undirected graph and a root, the goal is to construct a spanning tree in which the distance from the root to every vertex is at most an $\alpha$-approximation of the original distance in the input graph, while the total weight of the tree is at most $\beta$ times the weight of a minimum spanning tree. Similarly, the {\it shallow-light Steiner tree} problem \cite{elkin2015steiner} generalizes this to a designated subset of terminal vertices, and further allows the distance bounds to be specified by an arbitrary metric, while its total weight must be within a factor of $\beta$ of the optimal Steiner tree. 
While shallow-light variants focus on bounding stretch and minimizing total edge weight, \SSPT\ minimizes the number of nonterminal vertices under strict path constraints, which sets it apart from the shallow-light family.

A related but distinct variant is the {\it hop-constrained Steiner tree} problem \cite{gouveia2011modeling} that introduces a constraint on the number of edges allowed in the path from the root to each terminal, while still aiming to minimize the total weight of the resulting tree. However, unlike \SSPT, it permits any paths that satisfy the hop constraint, regardless of whether they are the shortest paths. \SSPT\ is more restrictive in this regard, as it permits only the shortest paths and still aims to minimize the number of nonterminal vertices.

\section{\SSPT\ Problem Definition and Hardness}
\label{s:def}

Let us define the Steiner Shortest Path Tree (\SSPT) problem. Its instance $(G,s,X)$ is an undirected or directed nonnegatively weighted graph $G = (V,E,w)$, $w: E \to \mathbb{R}^+$, where a source vertex $s$ and a subset of vertices $X$, $s \notin X$, called terminals, are distinguished.
The goal is to find a shortest path tree w.r.t.\ $w$ from $s$ to the terminals that has the minimum number of nonterminal vertices.
Note that the undirected and directed versions of \SSPT\ are somewhat similar to the undirected and directed Steiner tree problems, but with the specifics of requiring the paths in feasible trees to be shortest.
The weighted variant of \SSPT\ is obtained by assigning weights to the non-terminals and modifying the objective to be finding a shortest path tree that minimizes the overall weight of the non-terminals.

Next, we prove that \SSPT\ is NP-Hard by a reduction from
the classic Set Cover problem. The input to the Set Cover problem consists of a universal set $U$
and a set of its subsets, denoted $\cal S$. The goal is to find a minimum cardinality subset of $\cal S$ that covers $U$. 
Given an instance of the Set Cover problem, the corresponding \SSPT\ instance is constructed as follows. 
Consider the usual representation of $I$ by the bipartite graph with the sides $\cal S$ and $U$ and the edges $(S,x)$ for all $S \in {\cal S}, x \in S$. We add to it the vertex $s$ with the edges $(s,S)$ for all $S \in \cal S$.
All edges are of weight 1.
Observe that any tree rooted at $s$ with edges from $s$ to $\cal S$ and from $\cal S$ to $U$ only
is a shortest path tree in $G$, and vice versa. 
Thus, the reduction provides a natural one-to-one correspondence between the feasible solutions of the instances of Set Cover and \SSPT, while maintaining the objective function value. 
Clearly, this correspondence keeps the optimality of solutions, as required. Note that the approximation ratios of feasible solutions are kept as well.

\section{Shortest Path Subgraph} 
\label{s:SPS}
Consider a weighted, undirected or directed graph $G=(V,E,w)$ 
with a vertex $s$ distinguished as its source. Many problems related to the shortest paths from $s$ to other vertices were studied over several decades. 
A reasonable approach to pruning the input of some of these problems is to introduce the \textit{shortest path subgraph} rooted at $s$. We suggest and study such an object in this section.

The layered network data structure of the \textit{unweighted} shortest paths from $s$ to a target vertex $t$ was introduced by Dinitz in \cite{Dinic70} 
and used as the basis of Dinitz's network flow algorithm \cite{Dinic70} and many following network flow algorithms. 
Its generalization to the layered network data structure of the shortest paths from $s$ to all other vertices had been briefly described at the end of the original paper \cite{Dinic70} and was elaborated and widely developed by Goldberg and Tarjan as the basis for the push-relabel network flow algorithm \cite{DBLP:journals/jacm/GoldbergT88}.

The generalization of the layered network to the case of \textit{weighted} graphs was suggested by Dinitz et al.\ in \cite{DBLP:journals/corr/abs-2101-11514} for the shortest paths from $s$ to a fixed vertex $t$. 
Here, we generalize the approach of \cite{DBLP:journals/corr/abs-2101-11514} to the shortest paths from $s$ to all other vertices of $G$. We consider $G$ to be a directed graph (digraph). Note that this also covers the case of undirected graphs, by the usual reduction that replaces every undirected edge by the pair of oppositely directed edges between the same vertices.

For any pair of vertices $x$ and $y$, we denote by $d(x,y)$ the \textit{distance}, that is, the length of the shortest path from $x$ to $y$ in $G$. We also use the abbreviation $d(v) = d(s,v)$.
We define the 
\textit{shortest path subgraph} rooted at $s$, $\tilde G(s)$, as the union of all shortest paths from $s$ to all other vertices in $G$.

\begin{theorem}
	\label{t:SPS}
\begin{enumerate}
	\item The subgraph $\G(s)$ consists of all vertices reachable from $s$ in $G$ and all edges $(u,v) \in E$ satisfying $d(u) + w(u,v) = d(v)$.
		\item
        Any path from vertex $x$ to vertex $y$ in $\tilde G(s)$ has weight $d(y) -d(x)$ and is a shortest path from $x$ to $y$ in $G$.
\end{enumerate}
\end{theorem}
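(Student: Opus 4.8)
The plan is to prove the two parts in order, using only the nonnegativity of $w$ (which makes every $d(v)$ well defined, finite precisely for the vertices reachable from $s$) and the standard \emph{optimal-substructure} property of shortest paths: every prefix of a shortest $s$--$v$ path is a shortest path to its endpoint, and, conversely, concatenating a shortest $s$--$u$ path with an edge $(u,v)$ satisfying $d(u)+w(u,v)=d(v)$ yields a shortest $s$--$v$ path.

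For Part~1, I would first dispatch the vertex set: a vertex lies on some shortest path out of $s$ --- in particular on the shortest $s$--$v$ path to itself --- exactly when it is reachable from $s$, which is the claimed vertex set (with $s$ included via the trivial path). For the edge set I would argue both inclusions. If $(u,v)$ is an edge of a shortest path $P$ from $s$ to some $t$, then by optimal substructure the prefix of $P$ ending at $u$ has weight $d(u)$ and the prefix ending at $v$ has weight $d(v)$; since the latter is the former extended by the single edge $(u,v)$, we get $d(u)+w(u,v)=d(v)$, so the edge is ``tight''. Conversely, given $(u,v)\in E$ with $d(u)+w(u,v)=d(v)$, take any shortest $s$--$u$ path and append $(u,v)$: positivity of $w$ prevents $v$ from already lying on that path (otherwise $d(v)\le d(u)<d(u)+w(u,v)=d(v)$), so the result is a simple $s$--$v$ path of weight $d(u)+w(u,v)=d(v)$, hence a shortest path, and it uses $(u,v)$; thus $(u,v)$ belongs to $\tilde G(s)$. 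Every such edge automatically has both endpoints reachable from $s$, so the vertex and edge descriptions are mutually consistent.

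For Part~2, let $Q=(x=v_0,v_1,\dots,v_k=y)$ be a path in $\tilde G(s)$. Each edge $(v_{i-1},v_i)$ is tight by Part~1, so $w(v_{i-1},v_i)=d(v_i)-d(v_{i-1})$, and summing telescopes to $w(Q)=d(y)-d(x)$ (the case $x=y$, $k=0$, giving weight $0$, is immediate); here $d(x)$ and $d(y)$ are finite because $x,y$ are vertices of $\tilde G(s)$, hence reachable from $s$ by Part~1. To see that $Q$ is a shortest $x$--$y$ path in $G$, observe that concatenating a shortest $s$--$x$ path with a shortest $x$--$y$ path gives $d(y)\le d(x)+d(x,y)$, i.e.\ $d(y)-d(x)\le d(x,y)$; combined with the trivial bound $w(Q)\ge d(x,y)$ this yields $d(x,y)\le w(Q)=d(y)-d(x)\le d(x,y)$, so all three quantities are equal and $Q$ attains the distance $d(x,y)$.

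I expect no serious obstacle: the statement is foundational and the argument is short. The only points requiring a little care are making the optimal-substructure claims precise, the reachability and finiteness bookkeeping for the distances, and the use of strict positivity of $w$ to rule out the degenerate situation in the converse direction of Part~1 (with zero-weight edges one would instead argue via shortest \emph{walks} or impose a tie-breaking convention).
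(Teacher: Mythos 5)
Your proposal is correct and follows essentially the same route as the paper: the tightness characterization of edges via prefix optimal substructure and appending a tight edge to a shortest $s$--$u$ path for Part~1, and the telescoping sum plus concatenation with a shortest $s$--$x$ path for Part~2 (you phrase the latter directly via the triangle inequality where the paper argues by contradiction, but the content is identical). Your closing remark about zero-weight edges and simplicity of the concatenated path is a fair point of care that the paper's proof glosses over, but it does not change the argument.
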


\begin{proof} 
(1) The substatement on the vertices of $\G(s)$ is straightforward. Any edge $(u,v)$ in a shortest path satisfies $d(u) + w(u,v) = d(v)$ due to the known property of any prefix of a shortest path from $s$ being a shortest path to its end-vertex.
Suppose that $(u,v)$ satisfies $d(u) + w(u,v) = d(v)$. Let $P$ be a shortest path from $s$ to $u$. Appending $(u,v)$ to $P$, we obtain a path from $s$ to $v$ of length $d(u) + w(u,v) = d(v)$, which is a shortest one. Hence, the edge $(u,v)$ is in $\G(s)$.

(2) Consider a path $P$ in $\tilde G(s)$ from $x$ to $y$. Let $P = (x=v_0, v_1, v_2, \dots, v_k=y)$. Then, $w(P) = w(v_0, v_1) + w(v_1, v_2) + \dots + w(v_{k-1}, v_k) = (d(v_1) - d(v_0)) + (d(v_2) - d(v_1)) + \dots + d(v_k) - d(v_{k-1}) = d(v_k) - d(v_0) = d(y) -d(x)$, as required. 

Assume, to contrary, that there is a path $P': w(P') < w(P) = d(y) -d(x)$, from $x$ to $y$ in $G$. Since $x$ is reachable from $s$, there exists path $P_x: w(P_x) = d(x)$, in $G$. Then, the concatenation of $P_x$ and $P'$ is a path from $s$ to $y$ of weight $w(P_x) + w(P') < d(x) + (d(y) - d(x)) = d(y)$ in $G$, a contradiction to the definition of $d(y)$. \qed
\end{proof}

Let us describe some properties of the shortest path subgraph $\tilde G(s)$.
By Theorem~\ref{t:SPS}, $\G(s)$ can be constructed by the execution of algorithm Dijkstra and a simple scan of $G$; thus, the construction time is dominated by the running time of algorithm Dijkstra, which is known to be almost linear in the size of $G$ (see a recent algorithm in \cite{DMMSY25}).

Secondly, by Theorem~\ref{t:SPS}(1), the condition on a path $P$ from $s$ to be shortest in $G$ is \textit{equivalent} to the condition that $P$ is in $\G(s)$.
This relation can simplify the shortest path-related problems. In particular, any instance $(G,s,X)$ of the \SSPT\ problem is equivalent to finding, in the \textit{unweighted} digraph $\G(s)$, a tree rooted at $s$ and spanning $X$ with a minimum number of nonterminal vertices in it.
This equivalence gives rise to the reduction of SSPT that we present in Section~\ref{ss:reductionDST}.



We can use a visual layout of $G$ where each vertex $v$ is placed at a plane point with the horizontal coordinate $d(v)$.
Note that if all edge weights are 1, then the graph $\G$ is \textit{layered}: the vertices are arranged in layers by the integer horizontal coordinates and every edge goes from one layer to the next. 
If the edge weights are positive, then every edge goes from left to right; consequently, the graph $\G$ is \textit{acyclic}. It is also acyclic in the case where the edge weights are nonnegative and there is no cycle consisting of edges with weights zero only in $G$. 
In the general case, there may be nontrivial strongly-connected components of $\G(s)$ containing edges of weight zero only.

We stress that the shortest path subgraph $\G(s)$ is a \textit{digraph} even if $G$ is undirected. To see why, consider the following example. Let $G$ be an undirected 4-cycle $(s,v_1, v_2, v_3)$ with all edge weights 1. Note that the two undirected paths $(s,v_1, v_2)$ and $(s,v_3, v_2)$ are shortest. Thus, $\G(s)$ is composed of these two directed paths. If we remove the edge directions, we could reach $v_3$ via the path $(s,v_1, v_2, v_3)$ in the updated $\G(s)$, but this path is not shortest, contradicting Theorem~\ref{t:SPS}(2). 

We can refine the pruning of $G$ to $\G(s)$ for some problems related to shortest paths. In the case where only the shortest paths from $s$ to the vertices in a given subset $X \subset V$ are of interest in a given problem, the further pruning of $\G(s)$ to the union of all the shortest paths from $s$ to the vertices in $X$, denoted $\G(s,X)$, is applicable.
Note that, given $\G(s)$, the subgraph $\G(s,X) \subseteq \G(s)$ can be constructed in linear time by running a DFS scan from the vertices in $X$ in $\G(s)$ with the reversed direction of all edges, followed by removing from $\G(s)$ all the vertices not reached by this scan.

\section{Approximations of the \SSPT\ Problem}
\label{s:as}

We first show that the \SSPT\ can be reduced to the uniform vertex-weighted version of the directed Steiner tree problem (\UDST), and that this reduction preserves approximability. Thus, we can leverage the known approximation algorithms for the directed Steiner tree problem. Next, we consider a special class of graphs that we call polylogarithmically shallow graphs and show a polylog-approximation algorithm for \UDST\ on this class. This also implies a polylog-approximation algorithm for \SSPT\ on the class of polylogarithmically shortest path shallow graphs.

\subsection{Reduction of \SSPT\ to \UDST}
\label{ss:reductionDST}

The input to the directed Steiner tree problem (DST), studied in \cite{charikar1999approximation,GLS2023},
is a directed graph whose edges have non-negative weights, a source vertex $s$ and a subset of vertices $X$. The goal is to find  a tree with the minimum total edge weight rooted at $s$ and spanning $X$.
We can assume, w.l.o.g., that all vertices are reachable from $s$ (the unreachable ones may safely be removed).

We define two related problems on \textit{vertex-weighted} graphs $G=(V,E,W)$, $W: V \rightarrow \mathbb{R}^+$, $W(t)=0$ for all $t \in X$, which are equivalent to 
variants of DST.
In the general variant, termed VDST, the weights are assigned to the nonterminal vertices arbitrarily, and the goal is to find a tree with the minimum total vertex weight rooted at $s$ and spanning $X$. 
Note that VDST is equivalent to the particular case of DST where the weights assigned to edges $(u,v)$ are the same for all edges incoming to $v$, for any $v \in V$. 
The uniform case of VDST (UVDST) is given by fixing the weight of all non-terminals to 1.
Thus, UVDST is the \textit{minimum non-terminals directed Steiner tree problem} defined as follows: given a digraph, a source vertex $s$, and a set of terminals $X$, $s \notin X$, find a tree rooted at $s$ and spanning $X$ with the minimum number of non-terminal vertices.

By the analysis in Section~\ref{s:SPS}, the \SSPT\ problem on $(G,s,X)$ is equivalent to the \UDST\ problem on $(\G(s),s,X)$ (or on $(\G(s,X),s,X)$), keeping the objective function.
Therefore, any $\alpha$-approximation of the latter problem is an $\alpha$-approximation of the former problem. We thus arrive at the following reduction yielding a family of approximation algorithms for the \SSPT\ problem.

\begin{theorem}
\label{t:red-DST}
Let $\cal A$ be any approximation algorithm solving the \UDST\ problem with approximation ratio $\alpha$. Then, given an instance $I=(G, s, X)$ of the \SSPT\ problem, running $\cal A$ on the instance $(\G(s,X),s,X)$ (or $(\G(s,X),s,X)$) of \UDST\ returns a tree which is an $\alpha$-approximation to the optimum for $I$, keeping the same running time bound.
\end{theorem}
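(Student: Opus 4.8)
The plan is to obtain the theorem as a direct corollary of the structural facts established in Section~\ref{s:SPS}. I would exhibit an objective-preserving bijection between the feasible solutions of the \SSPT\ instance $I=(G,s,X)$ and the feasible solutions of the \UDST\ instance $(\G(s,X),s,X)$ (the argument works verbatim for $(\G(s),s,X)$ as well), show that this makes the two optima equal, and then push the approximation guarantee of $\cal A$ through the bijection.

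First I would set up the correspondence. Let $T$ be a feasible \SSPT\ solution for $I$, i.e.\ a tree rooted at $s$ in which, for every $x\in X$, the $s$-to-$x$ path is a shortest path of $G$; pruning any branch not used to reach a terminal only decreases the nonterminal count, so assume $T$ has no such branch. Every edge $(u,v)$ of $T$ then lies on a shortest path from $s$, hence satisfies $d(u)+w(u,v)=d(v)$ and so belongs to $\G(s)$ by Theorem~\ref{t:SPS}(1); moreover, since $T$ is a union of $s$-to-$x$ paths with $x\in X$, all of its vertices and edges lie in $\G(s,X)$ by the definition of the latter as the union of all shortest paths from $s$ to $X$. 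Thus $T$ is a feasible \UDST\ solution on $(\G(s,X),s,X)$. Conversely, if $T\subseteq\G(s,X)\subseteq\G(s)$ is a tree rooted at $s$ spanning $X$, then by Theorem~\ref{t:SPS}(2) each $s$-to-$x$ path in $T$ has weight $d(x)-d(s)=d(x)$ and is a shortest path in $G$, so $T$ is feasible for $I$. In both directions the objective is the very same integer, namely the number of nonterminal vertices of $T$; hence $\mathrm{OPT}(I)=\mathrm{OPT}(\G(s,X),s,X)$.

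Given this, the rest is immediate. Running $\cal A$ on $(\G(s,X),s,X)$ returns a tree $T$, a subgraph of $\G(s,X)$ and hence of $G$, that is feasible for \UDST\ and has at most $\alpha\cdot\mathrm{OPT}(\G(s,X),s,X)=\alpha\cdot\mathrm{OPT}(I)$ nonterminals; by the converse direction above $T$ is also feasible for $I$, so it is an $\alpha$-approximation for $I$. For the running time, $\G(s)$ is produced by a single run of Dijkstra's algorithm plus a linear scan of $G$, and $\G(s,X)$ by an additional linear-time reverse-reachability scan from $X$ (Section~\ref{s:SPS}); since $|\G(s,X)|\le|G|$, this preprocessing is dominated by the cost of $\cal A$ on an instance of the size of $G$, so the overall bound matches that of $\cal A$.

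I do not anticipate a real obstacle: modulo Theorem~\ref{t:SPS} this is bookkeeping. The only place that needs a little care is the pruning clause in the forward direction --- a feasible \SSPT\ tree must span $X$ but not necessarily all of $V$, so one must check that restricting it to $\G(s,X)$ preserves feasibility and cannot increase the nonterminal count, which is exactly what is needed for the two optima to coincide. A cosmetic caveat: if $G$ has zero-weight cycles then $\G(s)$ need not be acyclic, so ``tree rooted at $s$'' should be read throughout as ``out-arborescence''; Theorem~\ref{t:SPS}(2) still certifies that every root-to-terminal path is shortest, so the argument is unaffected.
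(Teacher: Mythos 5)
Your proof is correct and follows essentially the same route as the paper: the paper proves this theorem by appealing to the equivalence (established via Theorem~\ref{t:SPS}) between feasible \SSPT\ solutions in $G$ and trees rooted at $s$ spanning $X$ in $\G(s)$ (or $\G(s,X)$), which preserves the nonterminal count and hence the optima and approximation ratios. Your write-up simply makes explicit the two directions of that correspondence and the running-time accounting that the paper leaves implicit.
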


In particular, the quasi-polynomial $O(\log^2(|X|)/\log \log |X|)$-approximation algorithm in~\cite{GLS2023} yields
a quasi-polynomial $O(\log^2(|X|)/\log \log |X|)$-approximat\-ion algorithm for the \SSPT\ problem.

\noindent

\textit{Remark\/}:
To further compare the complexity of the problems discussed, we review some additional observations on (approximation-preserving) reductions among undirected SSPT (U-SSPT), directed SSPT (D-SSPT), and \UDST.
The general \UDST\ can be reduced to D-SSPT by setting all edge weights to be 0.
We are aware of a reduction from \UDST\ to U-SSPT only when the input graph is acyclic. The reduction sets the weight of every edge $(u,v)$ in the U-SSPT instance to be $D(s,v) - D(s,u)$, where $D(s,x)$ is the maximum length of a path from $s$ to $x$. 
As for the comparison between U-SSPT and D-SSPT, the reduction from U-SSPT to D-SSPT is straightforward, by replacing every edge by the two oppositely directed edges between the same vertices.
D-SSPT can be reduced to U-SSPT if there are no zero cycles (cycles with edges of weight zero only) in the graph, similarly to the reduction from \UDST\ to U-SSPT as above.

\subsection{Polynomial  Polylog Approximations for Polylogarithmically Shallow Graphs}
\label{ss:polylogshallow}

In this section, we suggest a polynomial polylog-approximation algorithm for the \UDST\ problem on polylogarithmically shallow graphs (to be defined precisely below). This implies, by Theorem~\ref{t:red-DST}, a similar result for the \SSPT\ problem.
By saying that $T$ is a tree, we mean a tree rooted at $s$ whose edges are directed along the paths from $s$. 
For any tree $T$, let $nt(T)$ denote the number of non-terminals in $T$.
We denote by $OPT(I)$ the number of non-terminals in the optimal solution to any problem instance $I$. 

Our plan of approximating \UDST\ is as follows. Given an instance $I=(G,s,X)$ of \UDST, we first construct an instance $I^{\mbox{\scriptsize SC}}$ of Set Cover whose optimum $OPT(I^{\mbox{\scriptsize SC}})$ lower bounds $OPT(I)$. Based on the solution to $I^{\mbox{\scriptsize SC}}$, we generate a feasible solution of $I$ whose objective function value is upper bounded by a function of $OPT(I^{\mbox{\scriptsize SC}})$ and, thus, also by the same function of $OPT(I)$.

Let $G[X]$ be the subgraph of $G$ induced by $X$ (that is, composed of the terminals and the edges between them). 
%
%
%
\remove{
This generalizes straightforwardly to the following statement:

\begin{lemma}
\label{l:reach}
    For any tree $T^{\mbox{\scriptsize X}}$ in $G[X]$ rooted at a terminal $t$ and a tree $T$ in $G$ spanning $t$, there exists a tree $T'$ in $G$ spanning all terminals in $T^{\scriptsize \mbox{X}}$ with $nt(T')=nt(T)$.
\end{lemma}
}
Consider the set of strongly-connected components of $G[X]$. (Note that some of these components may be a single terminal.) Define its subset $\cal S$ of \textit{source components} to be those not reachable in $G[X]$ from any other component. 

\begin{lemma}
\label{l:SCC}
   There exists a linear time algorithm that, given any tree $T$ spanning at least one terminal in each component in $\cal S$, builds a tree $T'$ spanning the entire $X$ with $nt(T') = nt(T)$.
\end{lemma}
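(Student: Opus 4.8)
The plan is to show that, given such a $T$, all the remaining terminals can be attached to $T$ using only edges of $G[X]$, so that no new non-terminal vertex is ever introduced. Write $Y = V(T) \cap X$ for the set of terminals already present in $T$; by hypothesis $Y$ contains at least one terminal from every source component in $\cal S$.

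The first step is to argue that every terminal of $X$ is reachable in $G[X]$ from some terminal of $Y$. Consider the condensation of $G[X]$ into its strongly-connected components: this is a DAG, so every component is reachable in it from some source component, i.e.\ from some component in $\cal S$. Realizing such a condensation path by actual edges of $G[X]$ --- using strong connectivity inside each component to travel from its entry vertex to its exit vertex --- yields, for any terminal $x$, a path in $G[X]$ from a terminal of the source component at the start of that path to $x$; by the hypothesis that start terminal can be chosen in $Y$, and all internal vertices of the path lie in $X$.

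The second step is the construction itself. Run a graph search (BFS or DFS) in $G[X]$ simultaneously from all vertices of $Y$ as roots; by the previous paragraph it reaches every vertex of $X$. For each $x \in X \setminus Y$ let $p(x) \in X$ be the vertex from which $x$ was first discovered, and set $T' = T \cup \{\, (p(x),x) : x \in X \setminus Y \,\}$. Since the search gives each newly discovered vertex exactly one parent and never closes a cycle, the added edges form a forest whose roots lie in $Y \subseteq V(T)$ and whose other vertices, $X \setminus Y$, are disjoint from $V(T)$; grafting this forest onto the tree $T$ at those roots produces a tree rooted at $s$. This tree contains $V(T) \cup X$, hence spans $X$, and since every vertex added to $T$ belongs to $X$ we have $nt(T') = nt(T)$. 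Both the search and the assembly of $T'$ take linear time.

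I expect the only delicate point to be the bookkeeping that $T'$ is genuinely a tree: one must observe that the vertices newly added to $T$ are exactly $X \setminus Y$, which is disjoint from $V(T)$, so no vertex receives a second parent and no cycle can be closed. The case analysis is made slightly heavier by the fact that $G[X]$ need not be acyclic, which is precisely the reason source \emph{components}, rather than source vertices, appear in the statement.
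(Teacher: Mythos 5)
Your proof is correct and follows essentially the same route as the paper's: both graft onto $T$ a search forest of $G[X]$ rooted at terminals already in $T$ (the paper picks one representative per source component, you use all of $Y$), relying on the fact that every strongly-connected component of $G[X]$ is reachable from a source component. Your version just spells out the condensation argument and the tree-hood bookkeeping that the paper leaves implicit.
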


\begin{proof}
    Let us choose one of the terminals spanned by $T$ in every source component. We construct a DFS forest $F$ in $G[X]$ from all these terminals. 
    By construction, $F$  spans $X$. We build a tree $T'$ as required by adding to $T$ all edges in $F$ leading to the terminals not in $T$. 
    Note that such edge additions do not change the set of non-terminals in the tree.
\qed
\end{proof}

    \remove{
\begin{lemma}
\label{l:bar G}
\begin{enumerate}
    \item There exists a linear time algorithm that, given any tree $\bar T$ spanning $\bar X$ in $\bar G$, builds a tree $T'$ spanning $X$ in $G$ with $nt(T') \le nt(\bar T)$. 
    \item $OPT(G,s,X) = OPT(\bar G,s,\bar X)$.
\end{enumerate}
\end{lemma}

\begin{proof}
    (Item 1) 
    Restore $G$ from $\bar G$ by expanding each contracted component. Thus, the set of edges of $\bar T$ forms a forest $F$ in $G$. 
    Note that exactly one of the terminals in each expanded component has an incoming edge in $F$. We construct a DFS forest $F'$ in $G[X]$ from all the terminals with incoming edges. 
    By construction, $F'$  spans $X$. We build a tree $T'$ as required by adding to $F$ all edges in $F'$ leading to the terminals without incoming edges in $F$. 
    Note that neither the expansion nor the edge addition as above changes the set of non-terminals in the tree,
    while some non-terminals in $\bar T$ may become terminals in $T'$.
    
    (Item 2)
    By item 1, $OPT(G,s,X) \le OPT(\bar G,s,\bar X)$.
    To prove the inverse inequality, consider any tree $T$ spanning $X$ in $G$ and contract in $T$ every source component of $G[X]$. The result may not be a tree. To transform it back to a tree, remove all but one of the edges entering each vertex. Additionally, remove all subtrees that do not contain any terminals in $\bar X$. The resulting tree $\bar T$ spans $\bar X$ in $\bar G$ and contains no more non-terminals than those contained in $T$.  
    \qed
\end{proof}

As a consequence, any $f(N)$-approximation \UDST\ algorithm for graphs without edges between terminal pairs implies an $f(N)$-approximation \UDST\ algorithm for general graphs, where $f(N)$ is any monotone function of the number of terminals in the graph.
}

We use the following observation in designing our approximation algorithm. 
We show that in any tree $T$ feasible for $I$, that is, spanning $X$, the set of edges from non-terminals to terminals ``covers'' $\cal S$, in a sense.
We denote by $Pre\cal S$ the set of non-terminals $\{v \in V \setminus X: \exists (v,t) \in E, t\in S, S\in \cal S\}$
(that is, the vertices in $Pre\cal S$ are the non-terminals in $G$ with outgoing edges to the terminals in the source components of $G[X]$) .
For any $v \in Pre\cal S$, let $N(v) = \{ S \in {\cal S}:  \exists (v,t) \in E, t \in S\}$.
We denote by $I^{\mbox{\scriptsize SC}}$ the instance of Set Cover defined by the subsets $\{N(v): v \in Pre\cal S\}$ of ${\cal S}$.

\begin{lemma}
\label{l:non-terminals}
   For any instance $I$ of \UDST, $OPT(I^{\mbox{\scriptsize SC}})$ is a lower bound on $OPT(I)$.
\end{lemma}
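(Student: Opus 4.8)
The plan is to convert an optimal tree for $I$ into a feasible cover for $I^{\mbox{\scriptsize SC}}$ of no larger size. Fix an optimal solution $T$ of $I$, so that $T$ is a tree rooted at $s$ spanning $X$ with $nt(T)=OPT(I)$; if no such tree exists (i.e., $X$ is not reachable from $s$), then $OPT(I)=\infty$ and there is nothing to prove. Since $OPT(I^{\mbox{\scriptsize SC}})$ is, by definition, at most the cardinality of any feasible cover of $\cal S$, it suffices to exhibit a cover of $\cal S$ of size at most $nt(T)$ extracted from $T$.

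First I would locate, for each source component $S\in\cal S$, the vertex through which $T$ ``enters'' $S$. Because $T$ spans all of $X$, it spans $S$; pick $t_S\in S$ of minimum depth in $T$ and let $v_S$ be the parent of $t_S$ in $T$. Minimality of the depth forces $v_S\notin S$. The crucial step is to argue that $v_S$ is a non-terminal: if instead $v_S\in X$, then $v_S$ belongs to some strongly-connected component $S'\neq S$ of $G[X]$, and the tree edge $(v_S,t_S)$ --- whose endpoints are both terminals, hence an edge of $G[X]$ --- witnesses that $S'$ reaches $S$ in $G[X]$, contradicting that $S$ is a source component. Hence $v_S\in V\setminus X$ and, since $(v_S,t_S)\in E$ with $t_S\in S$, we have $v_S\in Pre\cal S$ and $S\in N(v_S)$.

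Finally I would collect these sets: $\{N(v_S): S\in\cal S\}$ covers $\cal S$, since every $S$ is covered by $N(v_S)$, so it is a feasible solution of $I^{\mbox{\scriptsize SC}}$ (this also shows, incidentally, that $I^{\mbox{\scriptsize SC}}$ is feasible at all). Its size is at most the number of distinct vertices among the $v_S$, all of which are non-terminals of $T$; hence it is at most $nt(T)$. Therefore $OPT(I^{\mbox{\scriptsize SC}})\le nt(T)=OPT(I)$.

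The only real obstacle is the middle step --- that the entry vertex $v_S$ cannot be a terminal --- and this is precisely where the definition of a source component of $G[X]$ is used; everything else is routine bookkeeping, together with the separate treatment of the degenerate case of an infeasible instance.
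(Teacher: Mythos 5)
Your proof is correct and follows essentially the same route as the paper's: take the shallowest terminal of each source component in an optimal tree, observe that its parent must be a non-terminal in $Pre\cal S$ (using that $S$ is a source component of $G[X]$), and collect the resulting sets $N(v)$ into a cover of size at most $nt(T^*)$. The extra remarks on the infeasible case and on feasibility of $I^{\mbox{\scriptsize SC}}$ are harmless additions.
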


\begin{proof}
Consider any tree $T$ feasible for $I$. Denote by $V'(T)$ the set of non-terminals in $T$ that are also in $Pre\cal S$. 
Let us show that for any source component $S$, there exists an edge in $T$ from $V'(T)$ to a terminal in $S$. Let $t$ be any terminal from $S$ closest to $s$ in $T$. Its parent $v$ in $T$ cannot be in $S$ by our choice and cannot be in another strongly-connected component of $G[X]$ since $S$ is a source component. Hence, $v$ is a non-terminal, and $(v,t)$ is an edge as required. 

Let tree $T^*$ be optimal for $I$. By the above, $\{N(v): v \in V'(T^*)\}$ is a feasible solution for $I^{\mbox{\scriptsize SC}}$. Now, $OPT(I^{\mbox{\scriptsize SC}}) \le |V'(T^*)| \le nt(T^*) = OPT(I)$, as required.
\qed
\end{proof}

\remove{
For every source component $S\in \cal S$, define $t(S)$ to be a terminal from $S$ closest to $s$ in $T$ (we break ties arbitrarily). By construction, the parent of $t(S)$ in $T$ is a non-terminal; let $V'(T)$ be the set of the parents of $t(S)$, over all $S \in \cal S$. In this sense, $V'(T)$ ``covers'' $\cal S$ by the edges outgoing from the vertices in $V'(T)$ in $T$ to the vertices in the subsets in $\cal S$. 
Since $nt(T)$ is at least $|V'(T)|$, the minimum possible size of a cover as above is a lower bound on $OPT(I)$. 

Let us formalize this observation. We denote by $Pre\cal S$ the set of non-terminals $\{v \in V \setminus X: \exists (v,t) \in E, t \in S \in \cal S\}$ 
(that is, the vertices in $Pre\cal S$ are the non-terminals in $G$ with outgoing edges to the terminals in the source components of $G[X]$) .
For any $v \in Pre\cal S$, let $N(v) = \{ S \in {\cal S}:  \exists (v,t) \in E, t \in S\}$.
We denote by $I^{\mbox{\scriptsize SC}}$ the instance of Set Cover defined by the subsets $\{N(v) \subseteq {\cal S}: v \in Pre\cal S\}$.
Let tree $T^*$ be optimal for $I$. Since $\{N(v): v \in Pre{\cal S} \cap T^*\}$ is a feasible solution for $I^{\mbox{\scriptsize SC}}$, $OPT(I^{\mbox{\scriptsize SC}}) \le |Pre{\cal S} \cap T^*| \le nt(T^*) = OPT(I)$. That is, $OPT(I^{\mbox{\scriptsize SC}})$ is a lower bound on $OPT(I)$.
}

Consider the following \textit{algorithm} ${\cal A} = {\cal A}(I)$ for the \UDST\ problem:

\begin{enumerate}
    \item Construct $I^{\mbox{\scriptsize SC}}$ based on $I$.
    \item Find an $O(\log |\cal S|)$-approximation  of $I^{\mbox{\scriptsize SC}}$ by the algorithm in \cite{chvatal1979greedy}, denoted $P$. Note that $P$ is defined by  $V_{\mbox{\scriptsize SC}}\subseteq Pre\cal S$ such that $P=\{N(v): v \in V_{\mbox{\scriptsize SC}}\}$.
    \item 
    Construct an edge set $E_{\mbox{\scriptsize SC}}$ by choosing for each $S \in \cal S$, a single edge $(v,t) \in E$, where $t \in S$ and $v \in V_{\mbox{\scriptsize SC}}$.
    \item Build a BFS tree $T_{BFS}$ from $s$ to $V_{\mbox{\scriptsize SC}}$ in $G$.
    \item Build a tree $T$ by adding to $T_{BFS}$ all edges in $E_{\mbox{\scriptsize SC}}$ leading to the terminals not in $T_{BFS}$.
    \item Return the tree $T'$ obtained from $T$ by expanding it as in the proof of Lemma~\ref{l:SCC}.
\end{enumerate}

To analyze the approximation ratio of this algorithm, 
we define the notion of \textit{shallowness}. 
A digraph $G$ with a source vertex $s$ is $R$-shallow if for every $v \in V$, there exists a path from $s$ to $v$ that has no more than $R$ edges. (Note that the minimum possible value of such an $R$ is the radius from $s$ in $G$.)

\remove{define the \textit{hop-length} of a path $P$ to be the number of edges in $P$.
the \textit{hop-distance} of a vertex $v$ from $s$, $hd(v)$, to be the minimum hop-length of a path from $s$ to $v$,
and the \textit{hop-radius} $R(s,V')$ w.r.t.\ any subset $V'\subseteq V$ to be $\max_{v \in V'} hd(v)$.
}

\begin{theorem}
\label{t:log DST}
The algorithm ${\cal A}$ is polynomial and returns an $O(R \cdot \log(|X|))$-approximation to the \UDST\ problem on $R$-shallow digraphs. 
\end{theorem}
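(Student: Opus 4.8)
The plan is to bound separately the two pieces that make up the returned tree $T'$: the "cover edges" in $E_{\mbox{\scriptsize SC}}$ together with their tails $V_{\mbox{\scriptsize SC}}$, and the BFS tree $T_{BFS}$ connecting $s$ to $V_{\mbox{\scriptsize SC}}$. The first piece is controlled by the Set Cover approximation, and the second by the shallowness parameter $R$. Throughout we use Lemma~\ref{l:non-terminals}, which gives $OPT(I^{\mbox{\scriptsize SC}}) \le OPT(I)$, and we also observe that $OPT(I) \ge 1$ whenever $X$ is nonempty and not all terminals are reachable via terminals only (the degenerate cases are trivial), so additive and multiplicative bounds in terms of $OPT(I)$ are interchangeable up to constants.

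First I would argue polynomiality: steps~1 and~3--6 are clearly polynomial (constructing $I^{\mbox{\scriptsize SC}}$ is linear in $|E|$, BFS and the DFS-expansion of Lemma~\ref{l:SCC} are linear), and step~2 runs the greedy Set Cover algorithm of~\cite{chvatal1979greedy}, which is polynomial. Next I would count non-terminals in $T'$. By the construction in step~5 and the expansion in step~6 (which, as in Lemma~\ref{l:SCC}, adds only edges of $G[X]$ and therefore introduces no new non-terminals), every non-terminal of $T'$ is a non-terminal of $T_{BFS}$ together possibly with the tails of the chosen edges $E_{\mbox{\scriptsize SC}}$ — but those tails lie in $V_{\mbox{\scriptsize SC}}$, which we may assume is already spanned by $T_{BFS}$ in step~4. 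Hence $nt(T') \le nt(T_{BFS})$, and it suffices to bound the number of non-terminals on the BFS paths from $s$ to the $|V_{\mbox{\scriptsize SC}}|$ vertices of $V_{\mbox{\scriptsize SC}}$.

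Now, since $G$ is $R$-shallow, for every $v \in V_{\mbox{\scriptsize SC}}$ the BFS path from $s$ to $v$ has at most $R$ edges, hence at most $R$ internal vertices, hence at most $R$ non-terminals. Summing over the $|V_{\mbox{\scriptsize SC}}|$ target vertices (paths may share vertices, so this is an overcount in our favor) gives $nt(T_{BFS}) \le R \cdot |V_{\mbox{\scriptsize SC}}|$. Finally, $P$ is an $O(\log|\mathcal{S}|)$-approximation to $I^{\mbox{\scriptsize SC}}$, so $|V_{\mbox{\scriptsize SC}}| = O(\log|\mathcal{S}|) \cdot OPT(I^{\mbox{\scriptsize SC}}) \le O(\log|\mathcal{S}|) \cdot OPT(I)$ by Lemma~\ref{l:non-terminals}. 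Since each source component of $\mathcal{S}$ contains at least one terminal and the components are disjoint, $|\mathcal{S}| \le |X|$, so $\log|\mathcal{S}| = O(\log|X|)$. Combining, $nt(T') \le R \cdot |V_{\mbox{\scriptsize SC}}| = O(R \cdot \log|X|) \cdot OPT(I)$, which is the claimed bound.

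The main point requiring care — the only real obstacle — is the bookkeeping in steps~4--6: one must verify that the tree-repair operations (choosing a BFS tree to $V_{\mbox{\scriptsize SC}}$, grafting on the edges of $E_{\mbox{\scriptsize SC}}$ to pick up any terminals not already present, and then the DFS expansion of Lemma~\ref{l:SCC} to reach all of $X$) indeed produce a genuine tree spanning $X$ and, crucially, never turn a terminal into a non-terminal or add a non-terminal outside $T_{BFS} \cup V_{\mbox{\scriptsize SC}}$. Once that is pinned down, the counting is routine. A minor subtlety is the reachability assumption: $V_{\mbox{\scriptsize SC}} \subseteq Pre\mathcal{S} \subseteq V$, and since we assumed all vertices are reachable from $s$, the BFS tree in step~4 is well-defined; I would state this explicitly.
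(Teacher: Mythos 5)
Your proposal is correct and follows essentially the same route as the paper's proof: feasibility of $T'$ via Lemma~\ref{l:SCC}, the bound $nt(T') = nt(T_{BFS}) \le R\cdot|V_{\mbox{\scriptsize SC}}|$ from $R$-shallowness, and then $|V_{\mbox{\scriptsize SC}}| = O(\log|{\cal S}|)\cdot OPT(I^{\mbox{\scriptsize SC}}) \le O(\log|X|)\cdot OPT(I)$ via the greedy Set Cover guarantee and Lemma~\ref{l:non-terminals}. The extra bookkeeping you flag (that steps 4--6 add no non-terminals beyond those of $T_{BFS}$) is exactly what the paper delegates to Lemma~\ref{l:SCC} and the observation that $P$ being a set cover makes step 3 consistent.
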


\begin{proof}
The algorithm is polynomial since all its steps and auxiliary objects are polynomial in the input size. 
Step 3 of the algorithm is consistent, and the tree $T$ spans all components in $\cal S$ since $P$ is a set cover of $\cal S$.
The tree $T'$ is feasible for $I$ by Lemma~\ref{l:SCC}.
To prove the approximation ratio, we analyze the size of $T'$. 

\remove{
Let a tree $T^*$ be an optimal solution to $I$, that is, $nt(T^*) = OPT(I)$. 
Note that for any source component $S \in \cal S$, the tree $T^*$ contains an edge from a vertex in $Pre\cal S$ to $S$. Let $V^* \subseteq Pre\cal S$ be the set of tails of all such edges in $T^*$; thus $\{N(v): v \in V^*\}$ is a feasible solution to $I^{\mbox{\scriptsize SC}}$. 
Hence, $|V_{\mbox{\scriptsize SC}}| = O(\log(|{\cal S|}) \cdot OPT(I^{\mbox{\scriptsize SC}}) 
= O(\log(|{\cal S}|) \cdot OPT(I)$, by Lemma~\ref{l:non-terminals}.
}

Note that $T_{BFS}$ is a union of paths with the minimum number of edges from $s$ to each of the vertices in $V_{\mbox{\scriptsize SC}}$; therefore, $nt(T_{BFS})$ is at most $R \cdot|V_{\mbox{\scriptsize SC}}|$.
By Lemmas~\ref{l:SCC} and ~\ref{l:non-terminals}, the objective function value of the returned solution is $nt(T') = nt(T) = nt(T_{BFS}) \le R \cdot |V_{\mbox{\scriptsize SC}}| =  R \cdot O(\log(|{\cal S|}) \cdot OPT(I^{\mbox{\scriptsize SC}}) 
= O(R \cdot \log(|X|) \cdot OPT(I)$, as required.    
\qed
\end{proof}


We say that a directed graph $G$ with a source vertex $s$ is \textit{polylogarithmically shallow} if it is $R$-shallow where $R$ is polylogarithmic in $|X|$. 
\textit{Logarithmically shallow} digraphs are defined similarly. 

\begin{corollary}
\label{c:\UDST}
Algorithm ${\cal A}$ computes a polylog approximation of \UDST\ on polylogarithmically shallow digraphs, and an $O(\log^2 |X|)$-approximation on logarithmically shallow digraphs.
\end{corollary}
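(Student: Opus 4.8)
The plan is to derive Corollary~\ref{c:\UDST} directly from Theorem~\ref{t:log DST} by substituting the relevant bound on $R$ and simplifying. Theorem~\ref{t:log DST} already establishes that algorithm ${\cal A}$ runs in polynomial time and returns an $O(R \cdot \log(|X|))$-approximation on $R$-shallow digraphs, so all that remains is to interpret this bound for the two shallowness regimes.

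First I would handle the polylogarithmically shallow case: by definition such a digraph is $R$-shallow with $R = \mathrm{polylog}(|X|)$, so plugging into Theorem~\ref{t:log DST} gives an approximation ratio of $O(\mathrm{polylog}(|X|) \cdot \log |X|)$, which is again polylogarithmic in $|X|$; the polynomial running time is inherited verbatim since the bound in Theorem~\ref{t:log DST} does not depend on $R$ exceeding the input size. Second, for the logarithmically shallow case, $R = O(\log |X|)$, so Theorem~\ref{t:log DST} yields an approximation ratio of $O(\log |X| \cdot \log |X|) = O(\log^2 |X|)$, as claimed. Both statements follow by direct substitution.

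There is essentially no obstacle here: the corollary is a specialization of the preceding theorem obtained by unfolding the definition of (poly)logarithmic shallowness. The only point worth a sentence is that ``polylogarithmic in $|X|$'' is closed under multiplication by $\log |X|$, so the product $R \cdot \log |X|$ stays polylogarithmic; this is immediate. Hence the proof is a one-line invocation of Theorem~\ref{t:log DST} with $R$ instantiated appropriately.

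\begin{proof}
Both claims are immediate consequences of Theorem~\ref{t:log DST}. If $G$ is polylogarithmically shallow, then it is $R$-shallow for some $R$ that is polylogarithmic in $|X|$; by Theorem~\ref{t:log DST}, algorithm ${\cal A}$ runs in polynomial time and achieves approximation ratio $O(R \cdot \log |X|)$, which is polylogarithmic in $|X|$ since the product of a polylogarithmic function and $\log |X|$ is again polylogarithmic. If $G$ is logarithmically shallow, then $R = O(\log |X|)$, and Theorem~\ref{t:log DST} gives approximation ratio $O(R \cdot \log |X|) = O(\log^2 |X|)$.
\qed
\end{proof}
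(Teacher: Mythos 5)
Your proposal is correct and matches the paper's (implicit) reasoning exactly: the corollary is stated without a separate proof precisely because it follows by substituting $R = \mathrm{polylog}(|X|)$ or $R = O(\log|X|)$ into the $O(R\cdot\log|X|)$ bound of Theorem~\ref{t:log DST}. Nothing further is needed.
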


For our \SSPT\ result, we define \textit{shortest path shallowness}. A  graph $G$ (either undirected or directed) with a source vertex $s$ is $R$-shortest path shallow if for every $v\in V$ there exists a \textit{shortest} path from $s$ to $v$ with no more than $R$ edges. 
(Note that the minimum possible value of such an $R$ is the radius from $s$ in $\tilde G(s)$.)
We say that $G$ with a source vertex $s$ is \textit{polylogarithmically shortest path shallow} if it is $R$-shortest path shallow where $R$ is polylogarithmic in $|V|$. 
\textit{Logarithmically shortest path shallow} graphs are defined similarly.

\begin{corollary}
\label{c:\SSPT} 
There exists a polynomial algorithm for the \SSPT\ problem on (undirected or directed) instances $(G,s,X)$ that provides a polylog-approx\-imation when $G$ is polylogarithmically shortest path shallow,
and an $O(\log^2 |X|)$-approxim\-ation when $G$ is a logarithmically shortest path shallow.
\end{corollary}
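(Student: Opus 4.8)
The plan is to derive Corollary~\ref{c:\SSPT} by composing two results already established in the excerpt: the approximation-preserving reduction of Theorem~\ref{t:red-DST}, which turns an \SSPT\ instance $(G,s,X)$ into a \UDST\ instance on $\tilde G(s)$ (or $\tilde G(s,X)$), and the approximation guarantee of Algorithm~${\cal A}$ from Theorem~\ref{t:log DST} and Corollary~\ref{c:\UDST} for $R$-shallow digraphs. The only real content beyond quoting these is to check that the shallowness hypothesis transfers correctly across the reduction, i.e.\ that ``$G$ is $R$-shortest path shallow'' is exactly the condition needed so that the \UDST\ instance handed to ${\cal A}$ is $R$-shallow.

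First I would recall the reduction: given $(G,s,X)$, build $\tilde G(s)$ (constructible in near-linear time via Dijkstra by Theorem~\ref{t:SPS}(1)), optionally prune to $\tilde G(s,X)$, and run ${\cal A}$ on $(\tilde G(s,X),s,X)$; by Theorem~\ref{t:red-DST} the returned tree is an $\alpha$-approximation for the \SSPT\ optimum whenever ${\cal A}$ is an $\alpha$-approximation for \UDST, and the running time is preserved (and remains polynomial, since $\tilde G(s)$ has size at most that of $G$ and is built in polynomial time). Next I would observe that, by definition, $\tilde G(s)$ is the union of all shortest paths from $s$, so a path from $s$ to $v$ in $\tilde G(s)$ with at most $R$ edges exists if and only if there is a shortest path from $s$ to $v$ in $G$ with at most $R$ edges — this is precisely the definition of $G$ being $R$-shortest path shallow (and passing to the subgraph $\tilde G(s,X)\subseteq \tilde G(s)$ only restricts attention to $v\in X$, which is all ${\cal A}$ needs, so the relevant radius does not increase). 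Hence if $G$ is $R$-shortest path shallow then the \UDST\ instance fed to ${\cal A}$ is $R$-shallow.

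Then I would invoke Theorem~\ref{t:log DST}: ${\cal A}$ returns an $O(R\cdot\log|X|)$-approximation on $R$-shallow digraphs in polynomial time. Specializing: when $G$ is polylogarithmically shortest path shallow, $R$ is polylogarithmic in $|V|$ (hence, since $|X|\le|V|$, also polylog in the instance size), giving a polylog-approximation for \SSPT; when $G$ is logarithmically shortest path shallow, $R=O(\log|V|)$, and one should be slightly careful about whether the bound is phrased in $|X|$ or $|V|$ — with $R=O(\log|X|)$ we get the clean $O(\log^2|X|)$ stated, and in general $R=O(\log|V|)$ yields $O(\log|V|\cdot\log|X|)$, which is $O(\log^2|V|)$; I would state it in the form matching Corollary~\ref{c:\UDST}. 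Combining with the polynomial running time from Theorem~\ref{t:red-DST} completes the proof.

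I do not expect a genuine obstacle here — the corollary is essentially a bookkeeping composition. The one point deserving a sentence of care is the matching of the shallowness parameter: one must confirm that ``shortest path shallow'' for $G$ is the same as ``shallow'' for $\tilde G(s)$, which is immediate from Theorem~\ref{t:SPS}(2) (every $s$-to-$v$ path in $\tilde G(s)$ is a shortest path, and conversely every shortest path lies in $\tilde G(s)$), and that pruning to $\tilde G(s,X)$ does not hurt because ${\cal A}$'s guarantee only references $|X|$ and the radius to the terminals. A secondary bookkeeping point is the undirected-versus-directed distinction: the reduction passes through the standard replacement of each undirected edge by two opposite arcs before forming $\tilde G(s)$, and $\tilde G(s)$ is a digraph in either case (as the $4$-cycle example in Section~\ref{s:SPS} shows), so ${\cal A}$, which is stated for digraphs, applies uniformly.
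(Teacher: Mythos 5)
Your proposal is correct and follows exactly the route the paper intends: the paper gives no explicit proof of this corollary, deriving it directly by composing the reduction of Theorem~\ref{t:red-DST} with Corollary~\ref{c:\UDST}, after defining $R$-shortest-path shallowness of $G$ precisely so that it coincides with $R$-shallowness of $\tilde G(s)$ (as the paper's parenthetical remark about the radius in $\tilde G(s)$ indicates). Your extra care about the $|V|$ versus $|X|$ parameterization and about pruning to $\tilde G(s,X)$ is sound bookkeeping that the paper leaves implicit.
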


The following generalization of Corollary~\ref{c:\UDST} to VDST and the according generalization of  Corollary~\ref{c:\SSPT} to the weighted SSPT may be of some interest.
To analyze VDST, define the weight $W(P)$ of a path $P$ to be the total weight of the vertices in it, and the distance $D(v)$ to be the minimum weight of a path from $s$ to $v$, $v \in V$.
We modify algorithm ${\cal A}$ to ${\cal A}^W$ by replacing the BFS tree in step 4 by the shortest path tree from $s$ to $V_{SC}$ w.r.t.\ $W$.
Note that the algorithm of \cite{chvatal1979greedy} provides an $O(\log |\cal S|)$-approximation solution also to the weighted Set Cover problem, so algorithm ${\cal A}^W$ is consistent with the VDST problem. 
Extending straightforwardly the analysis in the proof of Theorem~\ref{t:log DST} to this problem, we generalize Corollary~\ref{c:\UDST} to:

\begin{corollary}
There exists a polynomial algorithm for the VDST problem on the instances $(G=(V,E,W),s,X)$ that provides a polylog-approximation when the ratio $D(v)/W(v)$ is polylogarithmic in $|X|$, and an $O(\log^2 |X|)$-approximation when the ratio $D(v)/W(v)$ is logarithmic in $|X|$, for all $v \in V \setminus X$.
\end{corollary}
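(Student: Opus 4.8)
The plan is to run the modified algorithm ${\cal A}^W$ and transcribe the proof of Theorem~\ref{t:log DST}, replacing hop counts by $W$-weighted path lengths throughout. First I would make the Set Cover instance weighted: keep the ground set ${\cal S}$ and the sets $\{N(v): v \in Pre{\cal S}\}$ exactly as in the construction of $I^{\mbox{\scriptsize SC}}$, but assign the set $N(v)$ the cost $W(v)$; call the resulting weighted instance $I^{\mbox{\scriptsize SC}}_W$. Feasibility of the output is unaffected by the weighting: the solution $P$ returned in step~2 is still a set cover of ${\cal S}$, so the tree built in steps~3--5 still spans at least one terminal in every source component, and the tree $T'$ returned in step~6 still spans $X$ by Lemma~\ref{l:SCC}. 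Hence the whole argument reduces to bounding the total non-terminal weight of $T'$.

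Next I would redo the lower bound of Lemma~\ref{l:non-terminals} in weighted form. The part of its proof showing that, for any feasible tree $T$, each source component is entered by an edge whose tail is a non-terminal in $Pre{\cal S}$ uses no weights and carries over verbatim; so for an optimal VDST tree $T^*$ the set $V'(T^*) \subseteq Pre{\cal S}$ of such tails gives a feasible cover $\{N(v): v \in V'(T^*)\}$ of $I^{\mbox{\scriptsize SC}}_W$ of cost $\sum_{v \in V'(T^*)} W(v)$, which is at most the total non-terminal weight of $T^*$ because $V'(T^*)$ is a set of non-terminals of $T^*$ and all weights are nonnegative. Thus $OPT(I^{\mbox{\scriptsize SC}}_W) \le OPT(I)$. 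Since the greedy algorithm of~\cite{chvatal1979greedy} is also an $O(\log|{\cal S}|)$-approximation for weighted Set Cover, step~2 returns a $V_{\mbox{\scriptsize SC}}$ with $\sum_{v \in V_{\mbox{\scriptsize SC}}} W(v) = O(\log|{\cal S}|)\cdot OPT(I^{\mbox{\scriptsize SC}}_W) = O(\log|X|)\cdot OPT(I)$, using $|{\cal S}| \le |X|$.

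Then I would bound the weight of the tree actually produced and conclude. The expansion of Lemma~\ref{l:SCC} and the edge additions of step~5 touch only terminals, so the set of non-terminals of $T'$ equals that of the $W$-shortest-path tree from $s$ to $V_{\mbox{\scriptsize SC}}$ built in step~4; and that tree, being the union of $|V_{\mbox{\scriptsize SC}}|$ paths of $W$-weight $D(v)$, $v \in V_{\mbox{\scriptsize SC}}$, has total non-terminal weight at most $\sum_{v \in V_{\mbox{\scriptsize SC}}} D(v)$. Writing $\rho = \max_{v \in V\setminus X} D(v)/W(v)$ (finite, since the ratio hypothesis presupposes positive non-terminal weights), we have $D(v) \le \rho\, W(v)$ for each $v \in V_{\mbox{\scriptsize SC}} \subseteq V\setminus X$, so the non-terminal weight of $T'$ is at most $\rho \sum_{v \in V_{\mbox{\scriptsize SC}}} W(v) = O(\rho \log|X|)\cdot OPT(I)$; plugging in $\rho = \mathrm{polylog}(|X|)$, respectively $\rho = O(\log|X|)$, yields the two stated ratios, and polynomiality is inherited from $\cal A$ since the only new ingredient, a vertex-weighted shortest-path tree, is computable in polynomial time. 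There is no essential new difficulty here --- the authors' own ``straightforwardly'' is justified --- but the one point to get right is the accounting in this last step: one must note that $\sum_{v \in V_{\mbox{\scriptsize SC}}} D(v)$ is an \emph{upper} bound (not the exact value) for the non-terminal weight of that tree, since the paths share vertices and the tree counts $W(s)$ only once, and that $D(v)\le \rho W(v)$ is invoked solely on vertices of $V\setminus X$ --- which is exactly where the hypothesis ``for all $v \in V\setminus X$'' is consumed. (If $W(s)\ne 0$ it occurs once in $OPT(I)$ as well, so it is harmless for the ratio.)
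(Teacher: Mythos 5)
Your proposal is correct and follows exactly the route the paper intends: the paper's own justification is just the one-line remark that one replaces the BFS tree by a $W$-shortest-path tree, uses the weighted Set Cover guarantee of~\cite{chvatal1979greedy}, and "extends straightforwardly" the analysis of Theorem~\ref{t:log DST}, which is precisely what you carry out (weighted cover instance with set costs $W(v)$, weighted analogue of Lemma~\ref{l:non-terminals}, and the bound $\sum_{v\in V_{\mbox{\scriptsize SC}}} D(v) \le \rho \sum_{v\in V_{\mbox{\scriptsize SC}}} W(v)$ in place of $R\cdot|V_{\mbox{\scriptsize SC}}|$). Your added care about where the hypothesis on $V\setminus X$ is used and about $W(s)$ only fills in details the paper leaves implicit.
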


Recall that in the weighted variant of \SSPT, the non-terminals are 
assigned weights $W$ (which differ from the weight $w$ on the edges) and the goal is to minimize the total weight of the non-terminals in a shortest path tree.

\begin{corollary}
There exists a polynomial algorithm for the weighted (undirected or directed) variant of the \SSPT\ problem 
on the instances $(G=(V,E,w,W),s,X)$ that provides a polylog-approximation when, 
for any vertex $v \in V$, there exists a shortest w.r.t.\ $w$ path $P$ from $s$ to $v$ with the ratio $W(P)/W(v)$ polylogarithmic in $|X|$, and 
an $O(\log^2 |X|)$-approximation when, 
for any vertex $v \in V$, there exists a shortest w.r.t.\ $w$ path $P$ from $s$ to $v$ with the ratio $W(P)/W(v)$ logarithmic in $|X|$.
\end{corollary}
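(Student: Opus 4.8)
The plan is to reduce the weighted \SSPT\ instance to a VDST instance via the shortest path subgraph of Section~\ref{s:SPS}, to check that the hypothesis survives this reduction, and then to invoke the VDST corollary above. Concretely, given a weighted \SSPT\ instance $(G=(V,E,w,W),s,X)$ (if $G$ is undirected, first replace every edge by the two oppositely directed edges, as in Section~\ref{s:SPS}), build the shortest path subgraph $\G(s)$ by one run of Dijkstra on $(G,w)$ followed by a linear scan; by Theorem~\ref{t:SPS} this takes polynomial time. By Theorem~\ref{t:SPS}(2) the directed $s$-rooted paths of $\G(s)$ are exactly the shortest (w.r.t.\ $w$) paths from $s$ in $G$, so a shortest path tree from $s$ spanning $X$ in $G$ is literally the same object as a tree rooted at $s$ spanning $X$ in the digraph $\G(s)$, and the two objectives --- the total $W$-weight of the non-terminals --- coincide. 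Hence, exactly as in the argument behind Theorem~\ref{t:red-DST}, the weighted \SSPT\ instance $(G,s,X)$ with vertex weights $W$ is equivalent to the VDST instance $(\G(s),s,X)$ with the same weight function $W$, preserving objective values and therefore approximation ratios. (One could equally well use $\G(s,X)$ in place of $\G(s)$.)

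Next I would verify that the precondition of the preceding VDST corollary holds for $(\G(s),s,X)$. Let $D(v)$ be, as in that corollary, the minimum total $W$-weight of a directed path from $s$ to $v$ inside $\G(s)$. Fix a non-terminal $v \in V$. The hypothesis of the corollary being proved furnishes a path $P$ from $s$ to $v$ that is shortest w.r.t.\ $w$ in $G$ and has $W(P)/W(v)$ polylogarithmic (resp.\ logarithmic) in $|X|$. Every edge of $P$, being an edge of a shortest path, lies in $\G(s)$ by Theorem~\ref{t:SPS}(1), so $P$ is a legal path from $s$ to $v$ inside $\G(s)$; hence $D(v) \le W(P)$, and thus $D(v)/W(v) \le W(P)/W(v)$ is polylogarithmic (resp.\ logarithmic) in $|X|$. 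As this holds for every non-terminal $v$ of $\G(s)$, the VDST corollary applies and guarantees that algorithm ${\cal A}^W$, run on $(\G(s),s,X)$ with weights $W$, is polynomial and returns a polylog- (resp.\ $O(\log^2|X|)$-) approximate solution to this VDST instance.

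Combining the two steps yields the claimed algorithm: build $\G(s)$, run ${\cal A}^W$ on $(\G(s),s,X)$ with the weights $W$, and return the resulting tree read back as a shortest path tree of $G$; it is feasible for the weighted \SSPT\ instance, its objective equals that of the VDST solution, and the overall running time is polynomial. I do not expect a genuinely hard step, since the result is essentially the composition of the shortest-path-subgraph reduction with the VDST corollary; the only point needing care is the hypothesis transfer in the second paragraph, where it is essential that the $W$-cheap path supplied by the hypothesis be a \emph{shortest w.r.t.\ $w$} path, so that Theorem~\ref{t:SPS}(1) certifies it as a legal path of $\G(s)$ and the bound $D(v) \le W(P)$ becomes available --- an arbitrary $W$-cheap $s$--$v$ path in $G$ would not serve this purpose.
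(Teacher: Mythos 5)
Your proposal is correct and follows exactly the route the paper intends for this corollary: reduce the weighted \SSPT\ instance to a VDST instance on the shortest path subgraph $\G(s)$ (as in Theorem~\ref{t:red-DST}), observe that the hypothesized shortest-w.r.t.-$w$ path $P$ lies in $\G(s)$ by Theorem~\ref{t:SPS}, so $D(v)\le W(P)$ and the precondition of the VDST corollary is met, then invoke algorithm ${\cal A}^W$. Your explicit check that the hypothesis transfers (and your remark that an arbitrary $W$-cheap path would not suffice) is a useful elaboration of a step the paper leaves implicit, but it is not a different argument.
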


\bibliographystyle{plain}
\bibliography{DS}

\end{document}